\tikzset{every picture/.style={/utils/exec={\sffamily}}}
\tikzset{every picture/.style={line width=0.8pt}}
\newcommand{\margincomment}[1]%
{{\marginpar{{\footnotesize\begin{minipage}{0.75in}%
          \begin{flushleft}%
            {#1}%
          \end{flushleft}%
        \end{minipage}%
      }}}\ignorespaces}
\newcommand{\mareksmargincomment}[1]{}
\newcommand{\mareksintextcomment}[1]{}
\colorlet{darkgreen}{green!45!black}
\newcommand{\myparagraph}[1]{{\smallskip\par\noindent{\bf #1}}}
\newcommand{\etal}{\emph{et al}.\xspace}
\newcommand{\mycase}[1]{\mbox{{\underline{Case #1}}:\/}}
\newcommand{\tildeG}{{\tilde{G}}}
\newcommand{\tildes}{{\tilde{s}}}
\newcommand{\tildet}{{\tilde{t}}}
\newcommand{\dualprec}{{\,\trianglelefteq\,}}
\newcommand{\aleft}{a_{\scriptscriptstyle\textit{L}}}
\newcommand{\dleft}{d_{\scriptscriptstyle\textit{L}}} 
\newcommand{\aright}{a_{\scriptscriptstyle\textit{R}}}  
\newcommand{\dright}{d_{\scriptscriptstyle\textit{R}}}   
\newcommand{\aleftleft}{{a}^1_{\scriptscriptstyle\textit{L}}}
\newcommand{\dleftleft}{{d}^1_{\scriptscriptstyle\textit{L}}} 
\newcommand{\arightright}{{a}^2_{\scriptscriptstyle\textit{R}}}  
\newcommand{\drightright}{{d}^2_{\scriptscriptstyle\textit{R}}}  
\newcommand{\TPL}{${\textrm{SP}}$}
\title{Modeling Fluid Mixing in Microfluidic Grids}
\titlerunning{Fluid Mixing in Grids}
\author{Huong Luu}{Department of Computer Science\\ University of California at Riverside}{}{}{}
\author{Marek Chrobak}{Department of Computer Science\\ University of California at Riverside}{}{}{}
\authorrunning{H.~Luu and M.~Chrobak}
\subjclass{%
	\ccsdesc[500]{Applied computing~Physical sciences and engineering $\bullet$ } 
	\ccsdesc[500]{Emerging technologies~Emerging simulation }  
}
\keywords{algorithms, graph theory, lab-on-chip, fluid mixing}
\begin{document}

\maketitle

\begin{abstract}
We describe an approach for modeling fluid concentration profiles in grid-based 
microfluidic chips for fluid mixing. This approach provides an 
algorithm that predicts fluid concentrations at the chip
outlets. Our algorithm significantly outperforms COMSOL finite
element simulations in term of runtime while still produces
results that closely approximate those of COMSOL.
\end{abstract}


\section{Introduction}
\label{sec: introduction}


Microfluidics is an emerging technology for manipulating nanoliter-scale fluid volumes,
with applications in a variety of fields including biology, chemistry, biomedicine, and materials science.
Fast progress in this area led to the development of microfluidic chips (MFCs), which are integrated microfluidic 
devices that are increasingly often used in various laboratory processes such as 
medical diagnosis~\cite{yager_etal_microfluidic_2006}, DNA purification~\cite{kastania_plasma_2016}, 
or cell lysis~\cite{jungkyu_microfluidic_2009}. 
MFCs offer a solution to automate laboratory experiments that saves time, reduces labor costs, limits usage of
chemical reagents, and replaces complex and expensive equipments~\cite{squires_microfluidic_2005, jungkyu_microfluidic_2009}. 

One function often implemented on MFCs is fluid mixing. The technologies for
mixing MFCs currently in use fall into two broad categories:
droplet-based chips, where the fluid is manipulated in discrete units called droplets,
and flow-based chips, based on continuous flow. In this work we focus exclusively on the flow-based model.

Fluid mixing is particularly important in sample preparation, where the objective is to dilute the sample fluid, 
also called \emph{reactant}, using another fluid that we refer to as \emph{buffer}. 
For example, in cell lysis, the sample preparation process includes a step of mixing blood sample 
with citrate buffer~\cite{jungkyu_microfluidic_2009}.
For some experimental processes samples with multiple pre-specified volumes and concentrations are needed.
For instance, such a sample may 
consist of $5\mu L$ of reactant with concentration $10\%$, $10\mu L$ of reactant with concentration $20\%$,
and $10\mu L$ of reactant with concentration $40\%$. 
Samples involving such multiple target concentrations are common in preclinical drug development processes,
and they are also used for other experiments, for example in biochemical assays~\cite{bhattacharjee_algorithm_2019}. 
For these applications, one needs to design a MFC that produces the specified target set of 
concentration/volume pairs of reactant.  

Several flow-based designs have been proposed in the literature.
In~\cite{paegel_microfluidic_2006}, the authors proposed an MFC that uses two-way valves to produce serial dilution. 
An electrokinetically driven MFC design was introduced in~\cite{stephen_microfluidic_1999} for serial mixing. 
The above approaches require different designs by changing valves or splitter channels placement,
or tuning voltage control to create different target sets. 
In~\cite{bhattacharjee_dilution_2017}, the authors gave a dilution algorithm for given target concentration ratios using rotary mixers.
However, their method produces waste, and it also uses valves which can complicate the fabrication process. 


\myparagraph{Grid mixers.}
A very different approach was developed by Wang~{\etal}~\cite{wang_random_2016}.
Their proposed solution involves creating a library of ready-to-use micromixers that
users can query to find chip designs with desired properties. 
Their MFCs are simple rectilinear grids with two inputs (one for reactant and one for buffer)
and three outputs, thus capable of producing a set of three different concentrations.   
They do not require any valves nor any other functional elements.
A user identifies an appropriate design by submitting a query consisting of the desired reactant concentrations.
In their approach the design process is eliminated and the database is created
by exploring a large collection of randomly generated grids. For each random grid, its outlet concentration
values are computed by simulating fluid flow through the grid using COMSOL Multiphysics® software
(a commercial software that uses finite element analysis method to model physics processes, including fluid dynamics).
As shown in~\cite{wang_random_2016}, these COMSOL simulation results provide very accurate prediction of
outlet concentrations in actual fabricated MFCs.

Exploring such random designs is extremely time consuming, as most randomly generated designs are actually
not useful, either because they are redundant or because they produce concentrations that are of little interest
(say, only near-pure reactant or buffer). Thus to produce a desired number of designs for the grid library,
one may need to examine many orders of magnitude more random designs. Indeed, in our experiment, to 
produce a collection of  2600 sufficiently different concentration triplets we needed to generate
50 millions $12 \times 12$ random grid designs.

With the need to test so many designs,
this approach can only be used for small size grids because of the simulation bottleneck of COMSOL Multiphysics®.
The simulation time for each design in~\cite{wang_random_2016} is roughly a minute. It is also not scalable to bigger grids. 
COMSOL Multiphysics® takes about 6 minutes to run a simulation 
for a $12 \times 12$ grid with the same mesh setting.
The process can be sped up by using a coarser mesh, but this results in
degraded accuracy of concentration predictions at the outlets.
Further, some users may prefer to design custom grids for their choices of the attributes:
velocity, solute, outlets'  locations, diffusion coefficient of reactant, etc. Such users would
need to have access to an often costly 
computational fluid dynamics software in order to be
able to run the simulations. 

The approach based on random grid generation was also considered in~\cite{weiquing_more_2018}, where the
authors propose a method to remove redundant channels to make the design process more efficient, simplifying
fabrication of grid MFCs and reducing reactant usage.


\myparagraph{Our contribution.}
Addressing this performance bottleneck in populating the grid library in the approach from~\cite{wang_random_2016},
we developed an algorithm to model fluid mixing in microfluidic grids, in order to 
predict the reactant concentrations at the outlets. 
In our approach, concentration profiles in grid channels are approximated using a simple
3-piece linear function, which allows us to simulate the mixing process in time linear in the
grid size.
The overall algorithm is scalable, simple, and 
produces good approximation of concentration values and flow rates   
at outlets in grid-based microfluidic chips, as compared to the results from COMSOL Multiphysics®. 
It is also much more general than the model from~\cite{wang_random_2016,weiquing_more_2018}, as
it allows grids of all sizes, any number of outlets, arbitrary inflow velocities,
and arbitrary fluids.
We also developed a web-based implementation of our algorithm that allows users to customize their
designs manually to match their needs\footnote{A prototype of this implementation is
available at \texttt{http://algorithms.cs.ucr.edu}.}. 

We add that our algorithm is not meant to be a complete substitute for COMSOL simulation. In the application to
populating mixing grid libraries, as in~\cite{wang_random_2016}, the recommended process would be to
apply our algorithm to select a collection of randomly generated designs with
potentially useful concentration vectors, and then re-verify these designs with COMSOL.
This way, COMSOL simulation will be performed only 
on a tiny fraction of generated grids --- less than $\%.01$, according
to our experiments. This significantly improves the overall performance.

While the main objective of this work was to develop an efficient numerical simulation algorithm, 
it also involves interesting combinatorial and topological aspects, as the correctness of our profile 
concentration model relies critically on duality properties of planar acyclic digraphs.


\section{Statement of the Problem}
\label{sec: statement of the problem}



\begin{figure}
	\hfill
	\includegraphics[height = 1.75in]{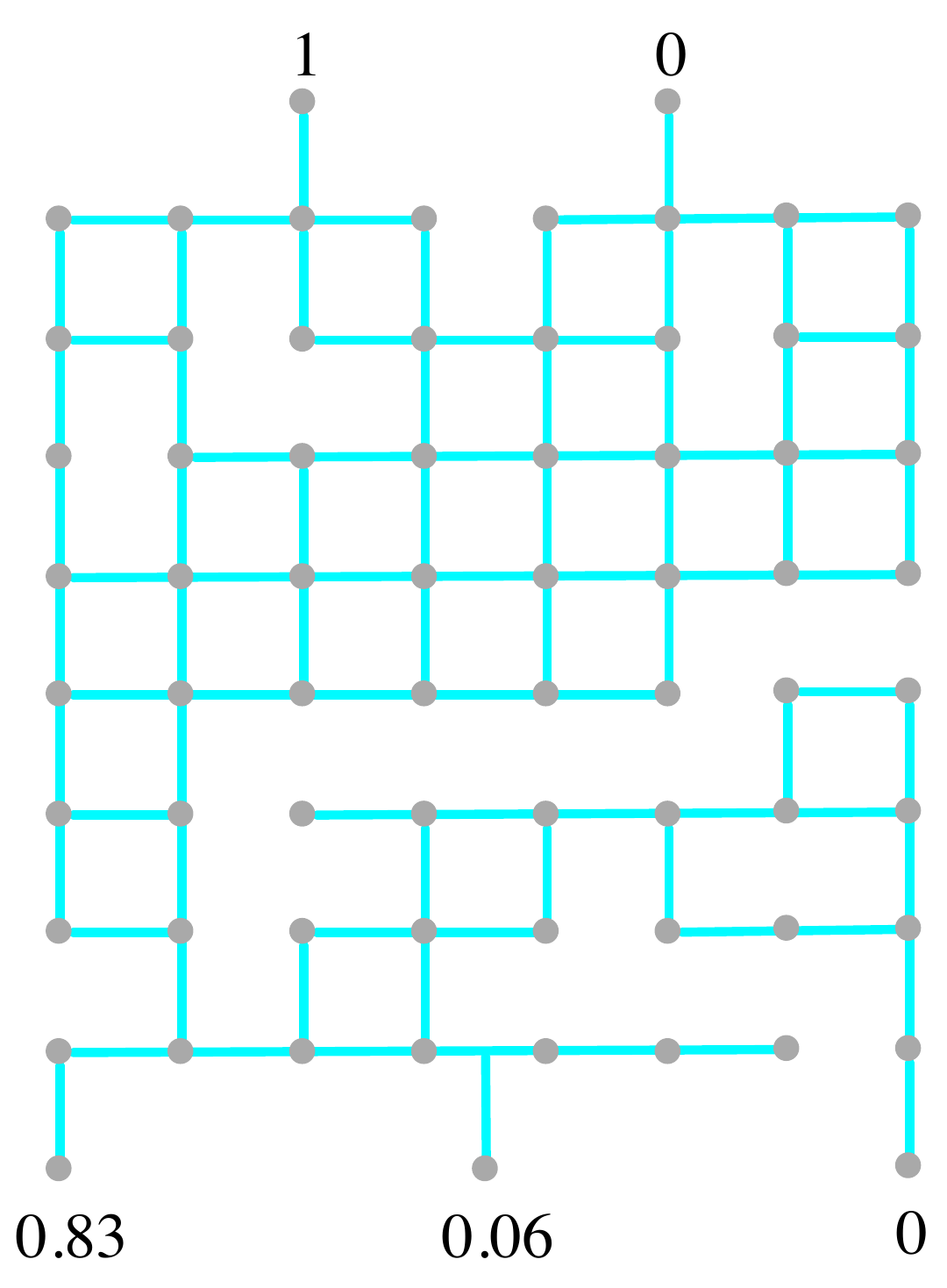}
	\hfill
	\includegraphics[height = 1.75in]{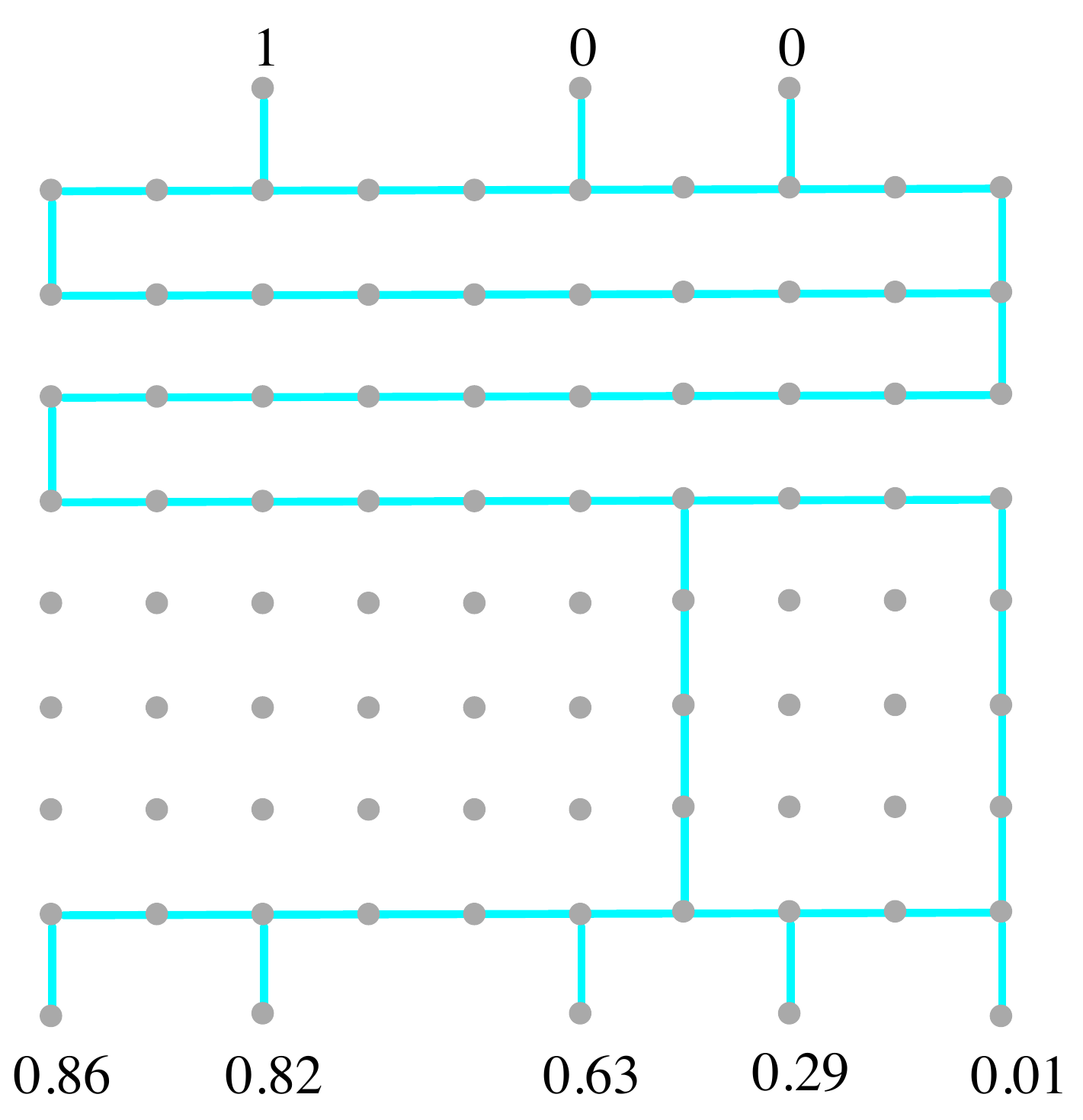}
	\hfill {\ }

	\caption{On the left, an example of an 8x8 grid from~\cite{wang_random_2016}.
	On the right, an example of a 10x8 grid in our model, with
	3 inlets and 5 outlets. The numbers at the outlets show the
	reactant concentrations.} 
	\label{fig_sample_mxn}			
\end{figure}			

We study grid-based MFCs for fluid mixing introduced in~\cite{wang_random_2016}.
Their model is $8\times 8$ grid with 2 inlets along the top edge of the grid and 3 outlets at the bottom, 
as shown in Figure~\ref{fig_sample_mxn}. 
The left inlet contains reactant, with concentration value 1, and the
right inlet contains buffer, with concentration value 0.
The channel width is 0.2 mm, and the channel length (distance between two grid vertices) is 1.5 mm.
The fluid velocity in the inlets is 10 mm/s. The outlets' pressure is 0 Pa.
The reactant is either sodium, fluorescein or bovine serum albumin.

We generalize the model from~\cite{wang_random_2016} in several ways. 
We allow arbitrary $m\times n$ grids, with any number of inlets and outlets (see Figure~\ref{fig_sample_mxn}).
The inlets are located along the top edge of the grid and the outlets at the bottom.
The inlets' solutions can take any concentration from 0 to 1, but they
must satisfy the following \emph{inlet monotonicity property}:
the inlet concentrations need to be non-increasing from left to right.
The inflows are of given constant rate, and the pressure values at all outlets are $0$.
Our model assumes that the flow throughout the grid is laminar, which is the case in standard
microfluidic applications.

In this setting, the problem we address can be formulated as follows: 
We are given  
(1) the specification of a grid design, and
(2) fluid properties, namely
its concentration and velocity at each inlet and its diffusion
coefficient. The goal is to determine the fluid concentration and velocity values at the outlets.


\section{Overview of the Algorithm}
\label{sec: overview of algorithm}


Our algorithm for predicting reactant concentrations at the outlets is based
on modeling its concentration profiles in grid's channels. 
Such a concentration profile
is a function that represents concentration values of the reactant along a line
perpendicular to the channel.
When fluid flows through straight segments of the grid,
this profile changes according to the laws of diffusion. In a node of the grid,
a flow may be split or several flows may be joined, 
and the profile changes
accordingly, producing complex non-linear functions.
The main idea behind our algorithm
is to approximate this profile using a simple 3-piece linear function. 
Once the profile at an output channel is computed, it determines the reactant concentration at this outlet.

We now give an overview of our algorithm, with more
detailed descriptions given in the sections that follow. 
\begin{description}

\item{(1)} Verify the correctness of the grid design, namely whether
	each edge (channel) is on at least one inlet-to-outlet path.
	(In our implementation, spurious fragments of the grid are automatically removed.)
	
\item{(2)} Compute the flow rates in each channel and pressure values at each node
	(see Section~\ref{sec: computing flow}).

\item{(3)} Partition the grid into parts, each part being either a 
	straight channel or a node. Depending on flow direction,
	a node can be one of three types:
	a join node (2-way or 3-way), a split node (2-way or 3-way), 
	or a combined join/split node (with 2 inflows and 2 outflows). 
	Sort these parts in an order consistent with the flow direction.
	(Once the flows are computed, one can think of the grid design as an
	acyclic directed graph. The desired order is then any topological sort
	of this acyclic graph.)

\item{(4)} Process the grid parts, in the earlier determined order, computing
	approximate concentration profiles (see Section~\ref{sec: algorithm for concentration profile}):
	\begin{itemize}
		\item For straight channels, the concentration profile at the
			end of the channel is determined from the profile at the
			beginning of the channel, based on the time the flow spends in this channel.
			This time is computed from the channel length and flow velocity.
		
		\item For nodes representing flow splits, split the
			incoming profile into outgoing profiles according to
			flow velocity ratios.
		
		\item For nodes representing flow joins, join the incoming
			profiles into the outgoing profile according to
			velocity ratios. This outgoing profile is then
			approximated by a 3-piece linear function.
	\end{itemize}
	
\item{(5)} Once all flow profiles in the grid are determined, for each outlet
	compute its fluid concentration as the integral of its
	concentration profile divided by the channel width.

\end{description}         


\myparagraph{Running time.}
The algorithm for profile computations takes only constant time to update the profile for each node and
channel, thus the overall running time is linear with respect to the 
size of the grid design. (Thus never worse than $O(mn)$ for an  $m\times n$ grid.)      
The overall running time is dominated by solving the linear system in part~(2).  
For grid sizes that might be of use in grid libraries, say up to $20\times 20$, 
Gaussian elimination is sufficiently fast.
For larger grids, one can take advantage of the sparsity of
the linear system to speed up the computation.


\section{Concentration Profile Model}
\label{sec: concentration profile model}


\begin{figure}
	\hfill
	\includegraphics[height = 1.6in]{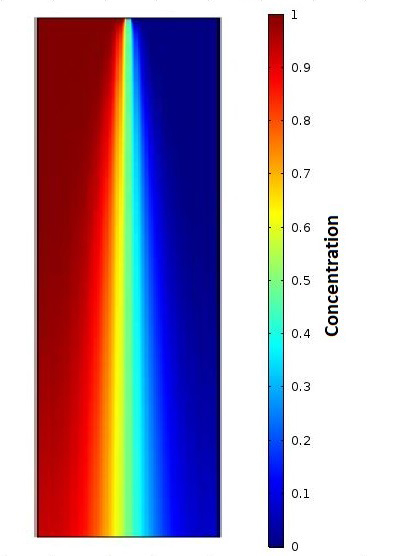}
	\hfill
	\includegraphics[height = 1.6in]{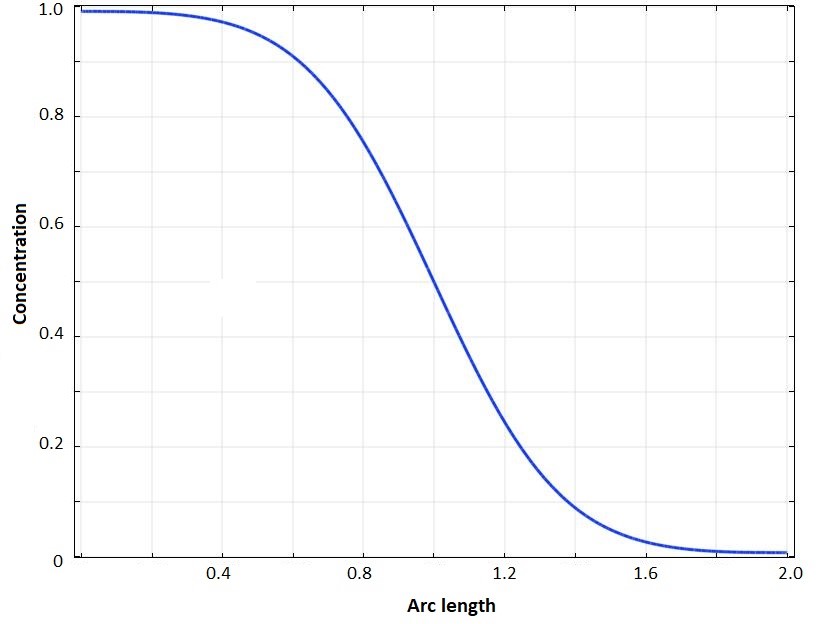}
	\hfill    {\ }
	
	\caption{The picture on the left shows the result of mixing of reactant (red) and
			buffer (blue) due to diffusion over some period of time.
			The figure on the right is a graph representing the concentration profile of this mixture
			at the end of the channel.
			} 
	\label{fig_profile}	
\end{figure}			


\myparagraph{True concentration profiles.}
Consider a mixture of two fluids, one reactant (with concentration 1) and the other buffer (with concentration 0), 
flowing along a straight channel
of some width $w$ and length $l$. 
For any distance $l'\le l$ from the beginning of this channel, 
a \emph{concentration} profile at $l'$
is a function that gives concentrations of all points in the channel along the line
segment (of length $w$) that is perpendicular to the channel and directed counter-clockwise to the flow.
We will be interested in how this profile evolves with time, that is with the value of $l'$ increasing.

Figure~\ref{fig_profile} shows an example. 
Reactant and buffer are injected at the same rate into the top opening of a vertical straight 
channel, and allowed to mix while they flow. 
Initially the two fluids are separated, with the reactant to the left of the buffer, so
the profile will be a 1/0 function.
The flow is assumed to be laminar and the two fluids will gradually mix as a result of diffusion. 
After a period of time, this mixing produces a non-uniform 
concentration profile shown in Figure~\ref{fig_profile} on the right. 
This concentration profile is a smooth curve with the leftmost 
region having concentration 1,  
the rightmost region having concentration 0, and the middle region contains partially mixed fluids with 
concentration decreasing from left to right. 
Using the diffusion model, this profile function can be determined
from the mixing time, which is the time it takes for the fluid to flow through the channel. 
Half of the width of this middle region is referred to as diffusion length and
denoted $L$ (normal to the flow direction, units $m$). It 
can be computed from the formula (see~\cite{kirby_book_micro_2013}):
\begin{equation}
	L = 2\sqrt{Dt}.
	\label{eqn: diffusion length formula}
\end{equation}
where $t$ is the mixing time and $D$ is the diffusion coefficient of the fluid (units $m^2/s$).

In microfluidic grids, the flow may be repeatedly split or different flows may get combined, and
the resulting concentration profile will not have the form in Figure~\ref{fig_profile} anymore;
in fact, profile functions that arise in such grids are too complex to be captured analytically.
Below we prove, however, that these profiles have a certain monotonicity
property that will allow us to approximate them by a simpler function.
Interestingly,
this monotonicity property involves the concept of the partial order that is dual to the flow pattern.


\myparagraph{Concentration monotonicity.}
The intuition behind the monotonicity property is illustrated in Figure~\ref{fig_profile}; intuitively, the
profile function in a channel should be monotonely decreasing from left to right. This property is trivial
if we start with a 1/0 profile (with pure reactant to the left of pure buffer) at the top of a vertical channel
and allow the fluid to diffuse when it flows down along the channel. 
However, in a mixing grid the flow pattern may be quite complex. 
For example, depending on a grid's structure and flow velocities at its inlets, the
flow direction in a vertical channel could be down or up.
As a result, even the notions of ``left'' and ``right'' are not well defined anymore. 
Joins and splits complicate this issue even more. 
Thus, to define this monotonicity property, we need to capture the notion of ``left-to-right direction''
not only with respect to one channel, but also between different channels. This notion will be
formalized using a partial ordering of the grid's channels. This partial order is defined as
the dual order of the flow pattern in the grid. Below we formalize these concepts.

\begin{figure}
\begin{minipage}[t]{0.4\linewidth}
	\includegraphics[height=2in]{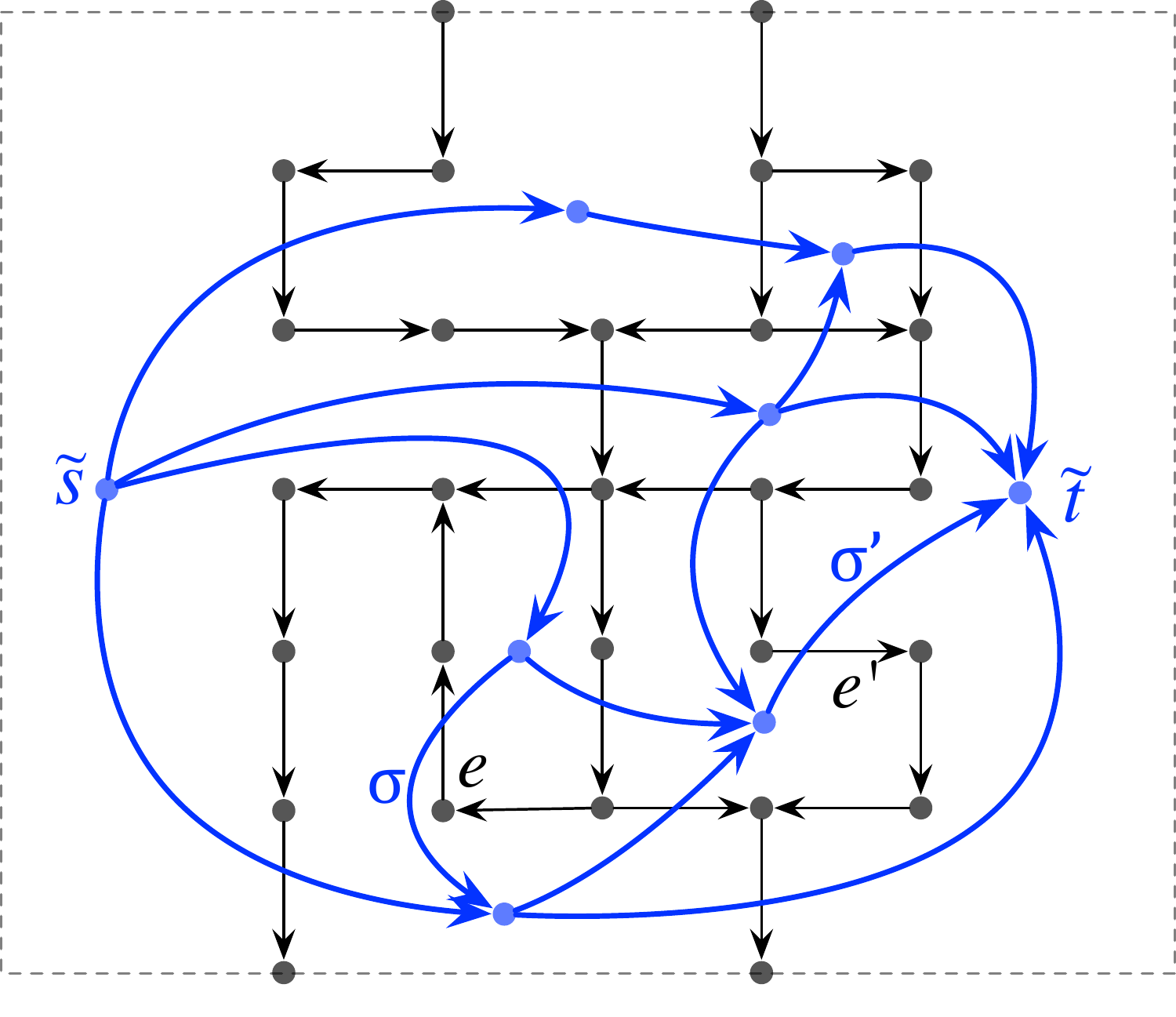}
	\caption{Grid representation graph $G$ (black) and its dual graph $\tildeG$ (blue).}
	\label{fig_dual}		
\end{minipage}
\hfill
\begin{minipage}[t]{0.4\linewidth}	
	\includegraphics[height = 1.6in]{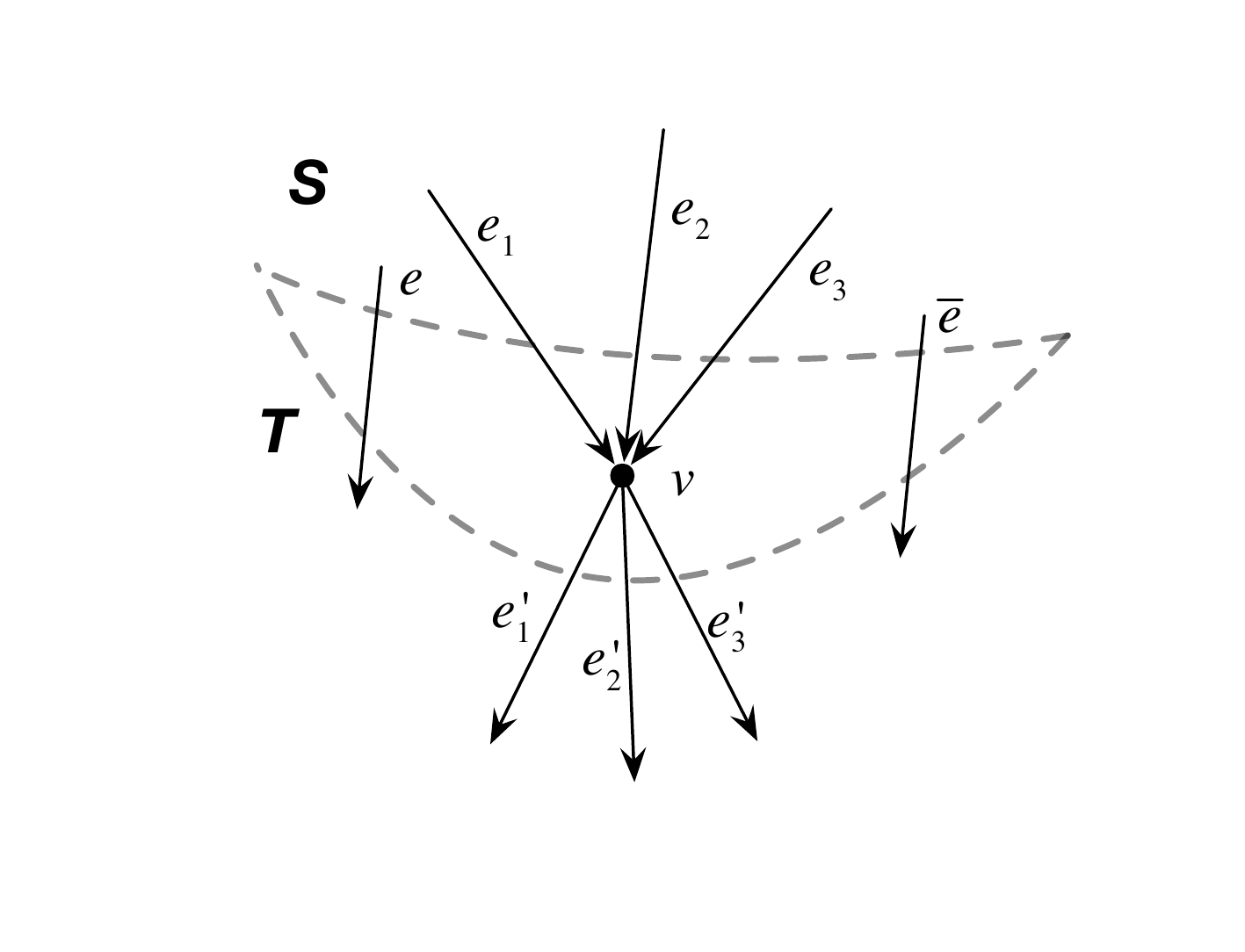}
	\caption{Moving a vertex $v$ from $T$ to $S$.} 
\label{fig_proof}	
\end{minipage}
\end{figure}			

Once the flow directions are computed, the flow pattern through the grid design 
can be naturally represented as a straight-line planar drawing of a DAG
(directed acyclic graph) whose nodes are grid points (including inlets and outlets) 
and edges are channels with directions determined by the flow. We will denote this graph by $G$.

Next, we construct a dual DAG $\tildeG$. To this end, enclose the grid 
in an axis-parallel rectangle, slightly wider than the grid,
with the inlets on its top edge and outlets on its bottom edge, as in Figure~\ref{fig_dual}. 
The grid (that is, the embedding of $G$) partitions this rectangle into regions.
For each region of $G$, we create a vertex of $\tildeG$. Two vertices $\phi$, $\psi$ of $\tildeG$
are connected by an edge if the boundaries of their corresponding regions share at least one edge of $G$.
The direction of the edge between $\phi$ and $\psi$ is determined 
as follows: pick any edge $(u,v)$ of $G$ shared by the regions of $\phi$ and $\psi$. This edge $(u,v)$ must have
a different orientation in the boundaries of
$\phi$ and $\psi$ (clockwise in one and counter-clockwise in the other).
Then the edge between $\phi$ and $\psi$ is directed from the node where $(u,v)$
is clockwise to the node where it is counter-clockwise. 
This definition does not depend on the choice of $(u,v)$.
We will refer to this edge as being dual to $(u,v)$. It can be shown
that $\tildeG$ is a DAG with a unique source $\tildes$ and unique sink $\tildet$ that
correspond to the regions on the left and right of the grid design, respectively.
(Except for secondary technical differences, the construction of such a dual can be found, for example,
in~\cite{tamassia_algorithms_1986,dibattista_tamassia_algorithms_1988}.)

We now use $\tildeG$ to define a partial order on the edges of $G$ (that is, the channels of our grid design).
Call two edges $e$, $e'$ of $G$ \emph{related in $G$} if they are both on the same inlet-to-outlet path;
otherwise call them \emph{unrelated in $G$}.
If $e$ and $e'$ are unrelated in $G$ then, denoting by $\sigma$ and $\sigma'$ their dual edges,
there must exist a path from $\tildes$ to $\tildet$ in $\tildeG$ that contains $\sigma$ and $\sigma'$.
If $\sigma$ is before $\sigma'$ on this path, then we write $e \dualprec e'$, and say that $e$ \emph{dually precedes} $e'$. 
Figure~\ref{fig_dual} shows an example.
It is not difficult to verify that relation ``$\dualprec$'' is a partial order on $G$'s edges.  

\begin{theorem}\label{thm: profile monotonicity}
(Concentration profile monotonicity.)
Consider a grid design as described in Section~\ref{sec: statement of the problem} (in particular,
the inlet concentrations are non-increasing) with some flow, and its corresponding graph $G$.
The concentration profiles in $G$ satisfy the following properties:
\begin{description} 
	\item{(cpm1)}	For any edge $e$ of $G$, each concentration profile across $e$ is a non-increasing function.
	\item{(cpm2)}	For any two edges $e$, $e'$ of $G$, if
					$e\dualprec e'$ then each concentration value in each profile across $e$ is at least as large as
					each concentration value in each profile across $e'$.
\end{description}
\end{theorem}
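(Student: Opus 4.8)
The plan is to prove \textbf{(cpm1)} and \textbf{(cpm2)} simultaneously, by induction on a topological order of $G$, processing straight channels and nodes (splits, joins, combined) in the same order the algorithm does. To make the induction work I strengthen \textbf{(cpm2)} into a \emph{separation} statement: for an edge $e$ let $\underline c(e)$ and $\overline c(e)$ be the smallest and largest concentration values occurring in any profile across $e$; because inside a straight channel diffusion only raises a profile's minimum and lowers its maximum, both $\underline c(e)$ and $\overline c(e)$ are attained already by the profile at the upstream end of $e$. The invariant I maintain after processing the first $k$ edges is: each satisfies \textbf{(cpm1)}, and every $\dualprec$-comparable pair $e \dualprec e'$ among them satisfies $\underline c(e) \ge \overline c(e')$, which is exactly \textbf{(cpm2)} for that pair. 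The base case is the inlet edges: they carry constant profiles, so \textbf{(cpm1)} is trivial, and two inlet edges are unrelated in $G$ (they lie on no common inlet-to-outlet path) while the $\tildes$-to-$\tildet$ path running along the top of the enclosing rectangle meets their dual edges left to right, so $e \dualprec e'$ holds for inlet edges precisely when $e$ is the left one, and then the inlet monotonicity property gives $\underline c(e)\ge\overline c(e')$.

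For the inductive step I analyze the local operations. A straight channel with input edge $e$ and output edge $e'$: the profile across $e'$ is obtained from the end-profile of $e$ by solving the one-dimensional diffusion equation with no-flux boundary conditions for the transit time; this evolution sends non-increasing data to non-increasing data (its spatial derivative solves the same equation with zero boundary values and remains nonpositive, by the maximum principle) and can only raise the minimum and lower the maximum, so $e'$ inherits \textbf{(cpm1)} and every separation inequality that $e$ had. A split node with inflow $g$ and outflows $f_L$ (the counter-clockwise-first one) and $f_R$: the profiles entering $f_L$ and $f_R$ are the restrictions of $g$'s end-profile to the two subintervals cut at the split point determined by the velocity ratio, so both are non-increasing, and $\overline c(f_L)=\overline c(g)$, $\underline c(f_L)=\overline c(f_R)=g(\text{split point})$, $\underline c(f_R)=\underline c(g)$. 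A join node with counter-clockwise-first inflow $e_L$, second inflow $e_R$, and outflow $f$: the profile entering $f$ is the velocity-rescaled concatenation of the end-profile of $e_L$ on the left with that of $e_R$ on the right; by the localization lemma below $e_L\dualprec e_R$, so the inductive hypothesis gives $\underline c(e_L)\ge\overline c(e_R)$, hence the concatenation does not jump upward at the seam, each piece is non-increasing, and $\overline c(f)=\overline c(e_L)$, $\underline c(f)=\underline c(e_R)$. A combined join/split node is treated as a join followed by a split of the intermediate profile.

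It remains to verify the separation inequality for every comparable pair that involves a freshly created output edge $f$ and an already-processed edge $e$. The key ingredient is a combinatorial lemma describing how $\dualprec$ localizes around a node: at a join node the counter-clockwise-first inflow dually precedes the other, and any edge dually below (resp.\ above) the outflow is dually below (resp.\ above) both inflows; dually for a split node; and the inputs of a node strictly precede its outputs in a topological order of $G$. Granting this, if $e$ is already processed and $\dualprec$-comparable to $f$, then $e$ cannot be dually above the just-created $f$, so $e\dualprec f$; the lemma then places $e$ dually before each input edge of $f$'s node, all processed earlier, so the inductive hypothesis bounds $\underline c(e)$ below by the maximum of each such input, and combined with the identities for $\overline c(f)$ obtained above this gives $\underline c(e)\ge\overline c(f)$. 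Since every comparable pair of edges is examined when the later of the two is created, the invariant holds throughout $G$ at termination, which is \textbf{(cpm1)} and \textbf{(cpm2)}.

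The main obstacle is the localization lemma. It is the place where planarity of $G$, and the fact that $\tildeG$ is a well-defined DAG with a unique source $\tildes$ and unique sink $\tildet$, are genuinely used: one has to track the orientations that dual edges acquire on the faces surrounding each node --- in particular the face wedged between the two inflows of a join, or between the two outflows of a split --- and argue that every $\tildes$-to-$\tildet$ path in $\tildeG$ traverses these dual edges consistently with the claimed order, so that being dually below an outflow really is equivalent to being dually below each of the corresponding inflows.
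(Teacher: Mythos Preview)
Your overall strategy---induction along a topological sweep of $G$, maintaining (cpm1) together with the separation inequality $\underline c(e)\ge\overline c(e')$ for every $\dualprec$-comparable processed pair---is exactly the argument the paper sketches (there phrased as induction on cuts $(S,T)$, moving one vertex at a time from $T$ to $S$). The localization lemma you isolate is the same combinatorial fact the paper uses when it says that the clockwise order of $v$'s incoming edges and the counter-clockwise order of its outgoing edges coincide with their $\dualprec$ order, and that a $\dualprec$-predecessor (resp.\ successor) of the incoming edges is automatically a $\dualprec$-predecessor (resp.\ successor) of the outgoing edges.

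There is, however, a genuine gap. You write: ``if $e$ is already processed and $\dualprec$-comparable to $f$, then $e$ cannot be dually above the just-created $f$, so $e\dualprec f$.'' This is false. The processing order is a topological order of $G$ (the \emph{primal} flow direction), whereas $\dualprec$ is the left--right order coming from $\tildeG$; the two are orthogonal. A concrete counterexample: take two disjoint inlet-to-outlet paths, the left one long and the right one short; every edge on the right path is processed before the bottom edges of the left path, yet every right-path edge is dually \emph{above} every left-path edge. So when a new outflow $f$ is created you must also treat the case $f\dualprec e$ with $e$ already processed.

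The fix is exactly symmetric to the case you did handle, and the paper's sketch carries it out: if $f\dualprec e$ then, by the localization lemma applied on the ``right'' side of the node, $e$ is dually above the rightmost inflow $e_R$ (for a join) or above the inflow $g$ (for a split), both already processed; the inductive hypothesis then gives $\underline c(e_R)\ge\overline c(e)$ (resp.\ $\underline c(g)\ge\overline c(e)$), and your identities $\underline c(f)=\underline c(e_R)$ (join) and $\underline c(f_R)=\underline c(g)$, $\underline c(f_L)\ge\underline c(g)$ (split) finish the job. Once you restore this missing half, your argument matches the paper's.
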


\begin{proof}
The formal proof of Theorem~\ref{thm: profile monotonicity} is omitted here due to lack of space; 
we only give a brief sketch. 
For any two edges $e,e'$ of $G$, we write $e \succcurlyeq e'$ to indicate that all concentration values 
in each profile across $e$ are as large as all concentration values in each profile across $e'$. 
With this definition, part (cpm2) says that $e\dualprec e'$ implies $e \succcurlyeq e'$.

Observe first that property (cpm1) is preserved as we move a profile along $e$ in the flow
direction, due to the properties of mixing (which shifts reactant mass from higher to lower values). With this
in mind, it is sufficient to select, arbitrarily, one ``representative'' profile for each edge and
to prove properties (cpm1) and (cpm2) only for these selected profiles.

Define a cut $(S,T)$ of $G$ in a natural way, as the partition of its vertices
such that each inlet-to-outlet path visits vertices in $S$ before visiting vertices in $T$.
We prove by induction on $|S|$ that all edges with tail vertex in $S$ satisfy the theorem.
This is true in the base case, when $S$ contains only inlets. In the inductive step,
the definition of cuts implies that there is a vertex $v$
that has all predecessors in $S$. Choose any such $v$. We move $v$ 
from $T$ to $S$ (see Figure~\ref{fig_proof}) and show that the inductive claim is preserved.
The clockwise ordering of $v$'s incoming edges $e_1,e_2,...$ is the same as their $\dualprec$ order. 
The same applies to the counter-clockwise order of its outgoing edges, $e'_1 \dualprec e'_2 \dualprec ...$.
We also know that the concentration profile at $v$ is the joined profile of $e_1, e_2, ...$ and then
it is split into decreasing concentration profiles of $e'_1, e'_2, ...$.
If there exist edges $e$ and $\bar{e}$ in $S$ that satisfy 
$e \dualprec e_1 \dualprec e_2 \dualprec ...\dualprec \bar{e}$,
then $e$ and $\bar{e}$ are also a predecessor and successor, respectively,
of $e'_1, e'_2, ...$ in the $\dualprec$ ordering. 
These observations and the inductive assumption applied to $e$ and $\bar{e}$
imply that $e \succcurlyeq e'_1 \succcurlyeq e'_2 \succcurlyeq ... \succcurlyeq \bar{e}$.
\end{proof}

\myparagraph{Approximate concentration profiles.}  
In our technique, we approximate concentration profiles by simple 3-piece linear functions specified by 
four parameters $\aleft$, $\aright$, $\dleft$, and $\dright$, as shown in Figure~\ref{fig_function}. 
In interval $[0,\dleft]$ the concentration is $\aleft$, 
in interval $[w-\dright,w]$ the concentration is $\aright$,
and the concentration linearly decreases in interval $[\dleft,w-\dright]$, from $\aleft$ to $\aright$.
Throughout the paper we will refer to such simplified profile functions as \emph{{\TPL}-functions}.
We have $w-\dleft-\dright = 2L$, where $L$ is the diffusion length value. 
The area under the profile curve, divided by the channel width $w$, represents the 
overall (average) concentration value of the fluid in this channel.

We remark that Theorem~\ref{thm: profile monotonicity} remains valid for our approximate
concentration profiles, instead of real ones. This is because its proof remains correct for
any (not necessarily physically valid) mixing process that preserves the mass of
reactant and buffer, and moves reactant mass rightward over time. Our approximate profile model
has this property.

In Section~\ref{sec: algorithm for concentration profile} we explain how we
can use {\TPL}-functions to compute approximate concentration profiles for each part of the grid design.   
\begin{figure}
\begin{minipage}[t]{0.4\linewidth}
	\includegraphics[height = 1.5in]{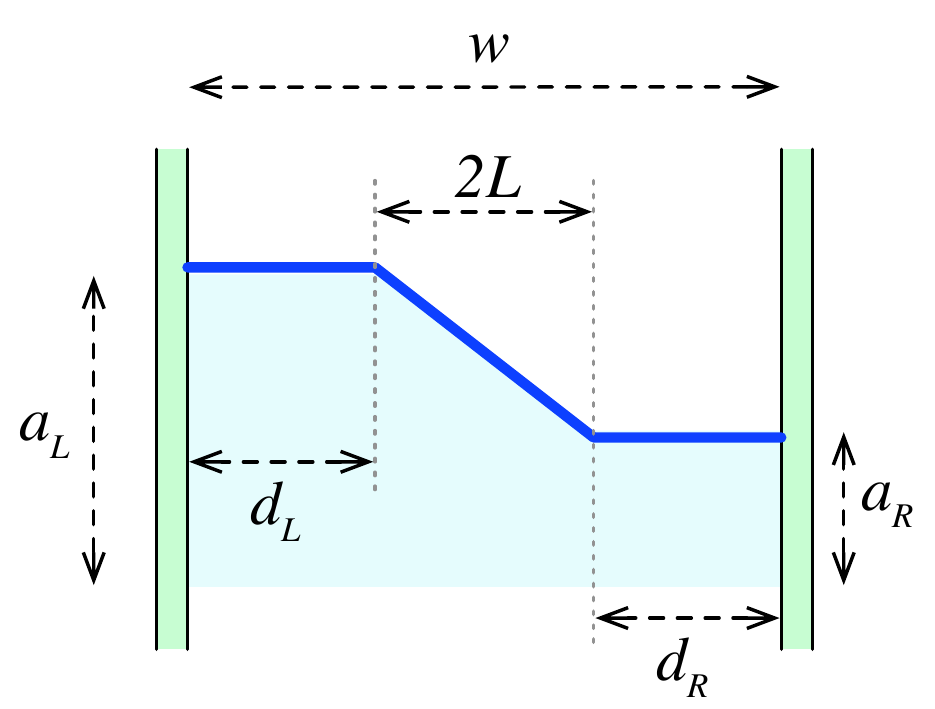}
	\caption{An {\TPL}-function profile.} 
	\label{fig_function}			
\end{minipage}
\hfill
\begin{minipage}[t]{0.4\linewidth}	
	\includegraphics[height = 1.5in]{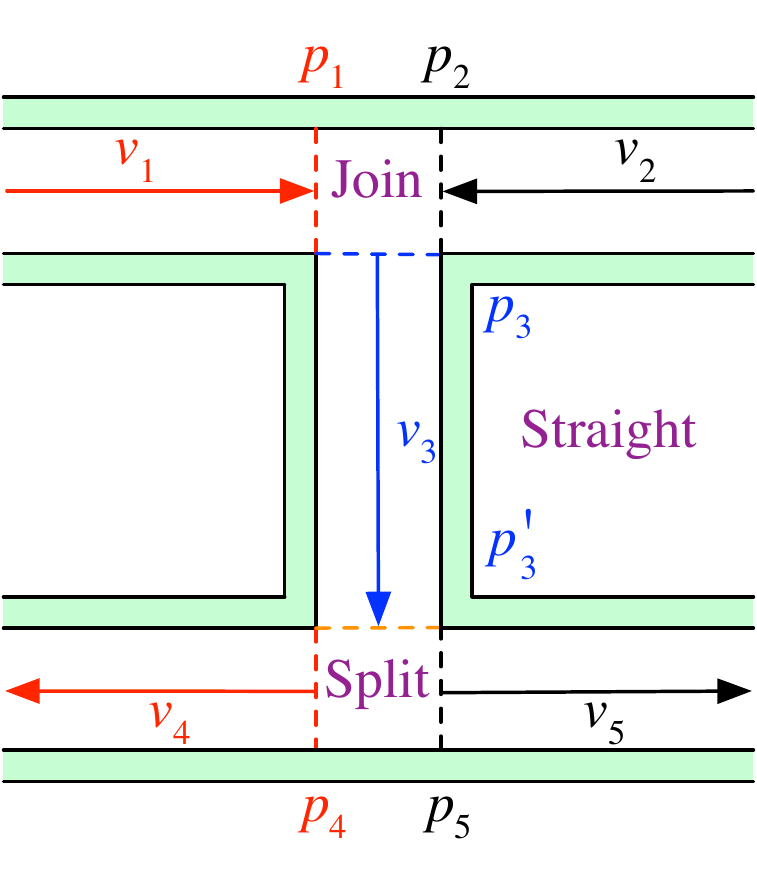}
	\caption{Grid partition: a join node on the top, 
	 followed by a downward straight channel and then a split node.} 
\label{fig_grid_partition}	
\end{minipage}
\end{figure}			


\section{Computing Flow Rate and Pressure}
\label{sec: computing flow}


The computation of the flow direction, flow rate and pressure in each channel of the grid is
quite straightforward, and it can be achieved by solving a system of linear equations. 
The unknowns are pressure values at every grid node and flow velocities in every channel. 
The first set of equations in this system are flow conservation equations: at every grid node
the total inflow is equal to the total outflow. The second set of equations are 
Hagen-Poiseuille equations which give relationship between
the flow rate and the pressure drop between the two ends of a channel:
$\Delta P = QR$,
where $Q$ is the volumetric flow, and $R$ is the flow resistance. As the
resistance value is the same in every channel segment and the input data specifies the velocity at 
the grid inlets, the exact value for $R$ is not needed and can be assumed to be $1$.

One thing to note is that in our setting we assume there is no friction at the walls of the channel,
thus implying that the velocity is uniform across the channel, which simplifies the formulas for
updating the profiles. (With friction, the velocity profile is a parabolic function.)
We found, however, that this assumption has only a negligible effect on the computed
concentration values.


\section{Algorithm for Estimating Concentration Profiles}
\label{sec: algorithm for concentration profile}

The core of our algorithm is a procedure for updating approximate concentration profiles
(represented by {\TPL}-functions) along the grid,
namely part~(4) of the overall algorithm in Section~\ref{sec: overview of algorithm}.
We describe this procedure in this section.

As mentioned earlier in Section~\ref{sec: overview of algorithm}, the grid is partitioned
into parts: straight channels and nodes, where nodes can be of several types depending on the
flow directions of its channels:
join nodes (2-way or 3-way), split nodes (2-way or 3-way), and combined join/split nodes 
(2 inflows and 2 outflows). 
This partitioning is illustrated in Figure~\ref{fig_grid_partition}.
In this figure, the channels are cut at points $p_1$, $p_2$, $p_3$, $p'_3$, $p_4$ and $p_5$,
producing one join node, one straight channel, and one split node.

In the join node at the top, flows 1 and 2 are joined into flow 3. 
The concentration profile at point $p_3$
is computed by combining {\TPL}-functions at points $p_1$ and $p_2$, taking the velocity ratios into account.
The combined function may not be an {\TPL}-function, and if so, we approximate it by
an {\TPL}-function.   

The straight channel stretches from the beginning of the channel at point $p_3$ to
its end point $p'_3$. The approximate {\TPL}-function at point
$p'_3$ is computed from the {\TPL}-function at point $p_3$ using the diffusion formula. 
(To account for mixing in the grid nodes, instead of the original channel length $l$, 
the algorithm uses a slightly larger value $\tilde{l} = l+w/2$, with $w$ is the width of the channel.)    

In the split node at the bottom, we use the {\TPL}-function at point $p'_3$ and the flow
velocities to compute the split {\TPL}-functions at points $p_4$ and $p_5$. This case is relatively
simple, as splitting an {\TPL}-function profile produces {\TPL}-functions, so no additional
simplification is needed here.


\subsection{Straight Channels}
\label{subsec: straight channels}

For a straight channel, the concentration profile changes over time with diffusion length increasing, that is with the middle 
portion of the {\TPL}-function becoming gradually longer and flatter. The {\TPL}-function may also
change its form, with $\dleft$ becoming $0$ or $\dright$ becoming $0$. We divide the evolution of the {\TPL}-function
along a straight channel into time intervals short enough so that in each such time
interval the profile function has the same form.

We now demonstrate how the {\TPL}-function changes in a straight channel during one of these time intervals. 
The fundamental idea behind our approach is this:  
Suppose that $\dleft > 0$ and $\dright > 0$ in Figure~\ref{fig_straight1} (Case~1 below). 
We then think of the current profile as
being a result of mixing process in a straight channel (or for stationary fluid)
that started from the original $1/0$-valued profile 
and lasted for some unknown time $t$.
This allows us to compute $t$ from the diffusion length formula~(\ref{eqn: diffusion length formula}). 
Once we have $t$, we can compute the new {\TPL}-function at a time
$t'=t+\Delta t$, where $\Delta t$ is chosen to be the time until the next ``event'', which is
the time when either the end of the channel is reached or when the profile form changes 
(that is, $\dleft$ or $\dright$ becomes $0$).
This leads to a relatively simple solution for updating
concentration profiles that satisfy $\dleft > 0$ and $\dright > 0$,
that is for {\TPL}-functions that have three non-empty segments.  
If $\dleft = 0$ or $\dright = w$, or both, a slightly less direct approach is required (Cases~2, 3 and 4 below).
\begin{figure}
	\hfill
	\includegraphics[height = 1.2in]{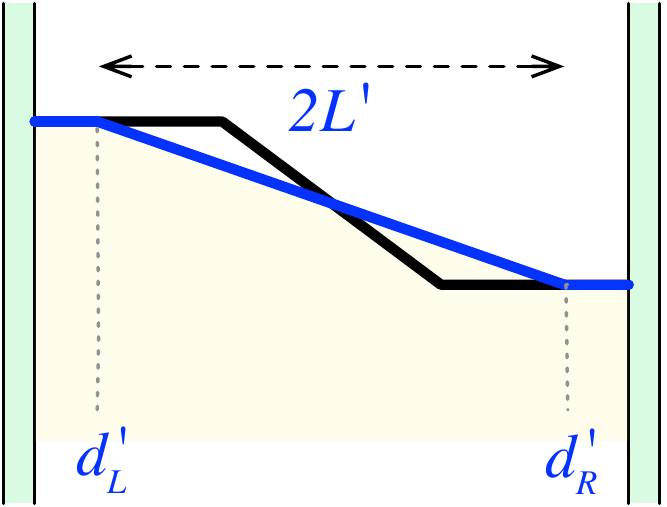}
	\hfill
	\includegraphics[height = 1.2in]{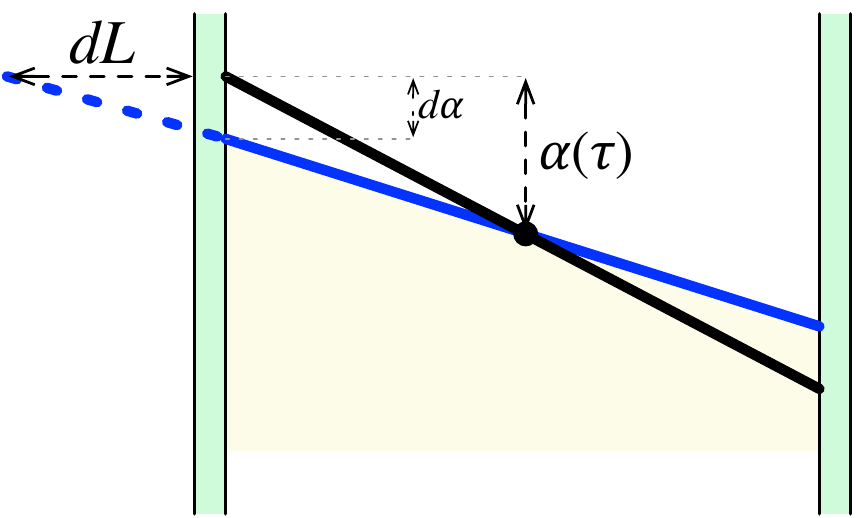}
	\hfill
	\caption{Updating the concentration profile in a straight channel in
		Case~1 (on the left) and Case~2 (on the right). 
		The ``before'' profile is black and the ``after'' profile is blue.} 
	\label{fig_straight 1 and 2}
\end{figure}	


\medskip
\noindent
\mycase{1} $\dleft > 0$ and $\dright > 0$ (see Figure~\ref{fig_straight 1 and 2}).	
		As explained above, 
		using formula~(\ref{eqn: diffusion length formula}) we compute $t= {L^2}/{4D}$.
		Next, we compute the new diffusion length $L'$ after time $\Delta t$:
		$L'= 2\sqrt{D(t+\Delta t)} =\sqrt{{L^2}+4D\Delta t}$.
		This will give us the new {\TPL}-function at time $t'$,
		defined by parameters
		$\dleft' = \dleft - (L'-L)$ and $\dright' = \dright - (L'-L)$. 
 		We choose $\Delta t$ to be the time until either the end of the channel is reached,
 		or until one of $\dleft$, $\dright$ becomes $0$, whichever happens first. 		


\medskip
\noindent
\mycase{2}	$\dleft = 0$ and $\dright = 0$ (see~Figure~\ref{fig_straight 1 and 2}).
		The channel boundaries introduce some distortion into the diffusion process that is difficult to model.
		Our approach here is based on the observation (verified experimentally) that this distortion has
		negligible effect on the overall profile. 
		Thus, we think about the current concentration profile as the linearly decreasing part of the
		profile 
		in a ``virtual'' wider channel in which the concentration profile has the same form as in Case~1,
		with $L = w/2 = 2\sqrt{Dt}$. The goal is to estimate the change of $\aright$ and $\aleft$ in time
		interval $[t,t']$.
		        
		For $\tau\in[t,t']$, define $\aleft(\tau)$ and $\aright(\tau)$ to be the
		concentrations at the left and right wall at time $\tau$, and 
		$\alpha(\tau) = (\aleft (\tau)- \aright(\tau))/2$. 
		We want to estimate the derivative $d\alpha/d\tau$.
		
		To this end, let $d L$ and $d\alpha$ denote the changes of $L$ and $\alpha(\cdot)$ 
		in a time interval $[\tau,\tau+d \tau]$, where $d\tau>0$ is very small. (In the calculations below
		we will approximate some values assuming that $d\tau$ approaches $0$.) 
		We have  $d L = 2\sqrt{D(\tau+d \tau)}-2\sqrt{D\tau} 
		= 2\sqrt{D}\,d\tau / (\sqrt{\tau+d \tau}+\sqrt{\tau}) 
		\approx d \tau \, \sqrt{D / \tau}$,  and, by simple geometry,  
		$d\alpha = -\alpha(\tau) \,d L/(L+d L)$.
	    	Substituting, we get  
		$d\alpha \approx - d\tau\, \alpha(\tau)/ 2\tau$.	
		Thus the derivative is $d \alpha(\tau) / d\tau = -\alpha(\tau) / (2\tau)$.

		Solving this differential equation and using the initial condition at $t$ gives us 
		$\alpha(\tau) = \alpha(t) \sqrt{t/\tau}$,  where $\alpha(t) = (\aleft - \aright)/2$. 
		Thus, at time $t'$ we have 
		$\alpha(t') = \alpha(t) \sqrt{t/t'}$. 
		From this we can calculate the new concentration values at
		the channel walls: $\aleft' = \aleft - \alpha(t)(1- \sqrt{t/t'})$ 
		and $\aright' = \aright + \alpha(t)(1- \sqrt{t/t'})$. 
		Here, $t'$ is taken to be the time when the end of the channel is reached.


\medskip
\noindent
\mycase{3}	 $\dleft > 0$ and $\dright = 0$  (see Figure~\ref{fig_straight 3 and 4}). 
   		The new concentration profile in this case is computed based on the assumption 
		that the value of $\dleft$ will decrease according to formula~(\ref{eqn: diffusion length formula}). 
		This gives us the new value $\dleft'$ for the new {\TPL}-function, namely
		$\dleft' = \dleft - (L'-L)$ for $L' =  2\sqrt{D(t+\Delta t)}  =  \sqrt{4D\Delta t + (w-\dleft)^2}$.

		Then we use molecular preservation property (that is the area under
		the concentration profile does not change) and straightforward calculation
		to obtain the new concentration at the right wall, $\aright'=\aleft - L(\aleft-\aright)/L'$.
		Here, $\Delta t$ is taken to be the time until either the end of the channel is reached or
		until $\dleft$ becomes $0$, whichever happens first.


\medskip
\noindent
\mycase{4}	 $\dleft = 0$ and $\dright >0$ (see Figure~\ref{fig_straight 3 and 4}).
	 	 This case is symmetric to Case~3, and similar calculation gives us
		 the values of  $\dright'$ and $\aleft'$.

\begin{figure}
	\hfill
	\includegraphics[height = 1.2in]{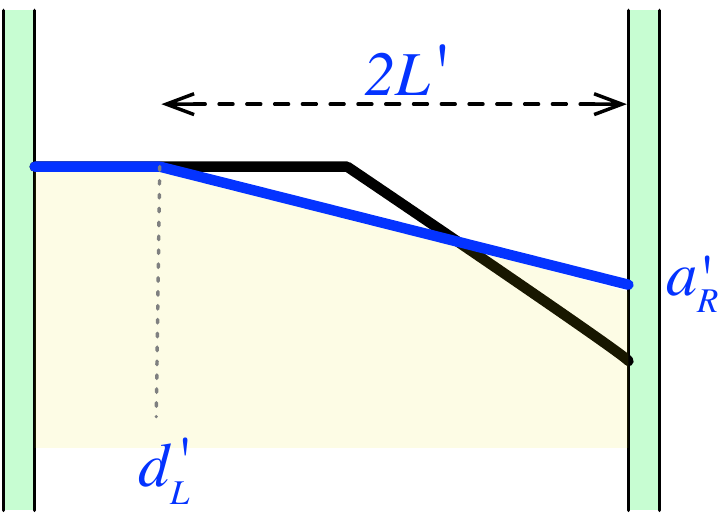}
	\hfill
	\includegraphics[height = 1.2in]{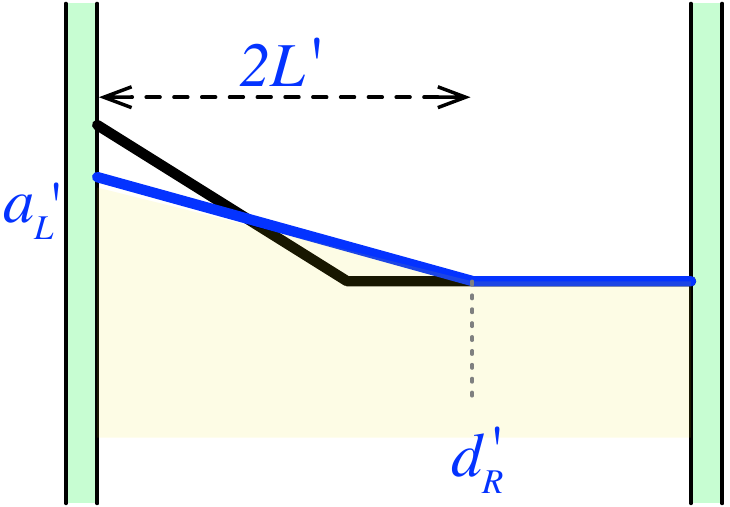}
	\hfill    {\ }
	
	\caption{Updating the concentration profile in a straight channel in
		Case~3 (on the left) and Case~4 (on the right).
		The ``before'' profile is black and the ``after'' profile is blue.} 
	\label{fig_straight 3 and 4}			
\end{figure}			

\subsection{Joining Concentration Profiles.}
\label{subsec: joining concentration profiles} 

To join flows, we first simply combine together the {\TPL}-functions of two or three joined channels, with
the portion of the channel width that each joined flow occupies being proportional to the velocities of the inflows,
as shown in Figure~\ref{fig_join1}. We will refer to this profile as the \emph{combined profile}.
In general, the combined profile will not be an {\TPL}-function.  
If this is the case, we will need to simplify it to an {\TPL}-function. This will be done in two steps.
First we will convert the combined profile into a \emph{tentative} {\TPL}-function 
(see Figure~\ref{fig_join2}),
which will be later adjusted to satisfy the reactant volume preservation property.

The correctness of joining profiles depends critically on Theorem~\ref{thm: profile monotonicity}. This theorem
implies that the concentration of the combined profile is non-increasing from left to right,
as illustrated in Figure~\ref{fig_join1}. This figure shows two profiles being combined. The case
of combining three profiles is essentially the same, as 
the middle channel's {\TPL}-function is only needed to compute the area 
under the combined function; its parameters can be ignored. Thus, to simplify the
description below, we will assume that we are dealing with a 2-way join node.

Let $\aleftleft$ and $\dleftleft$ represent the parameters of the combined profile inherited from
the corresponding parameters of the left joined channel. (So $\aleftleft$ is the concentration along the
left wall in the left channel, and $\dleftleft$ is the length of its {\TPL}-function's left flat segment,
but rescaled according to the channel's flow velocity.)
By $\arightright$ and $\drightright$ we denote the corresponding parameters inherited from the right joined channel.
We take these four values as the parameters of our \emph{tentative} {\TPL}-function profile.
This function is only tentative, because the area under this {\TPL}-function may be different than
that under the combined profile.

The final {\TPL}-function, whose parameters will be denoted $\aleft'$, $\dleft'$, $\aright'$ and $\dright'$, 
is computed from the tentative {\TPL}-function by adjusting its parameters to make
sure that the reactant volume (the area under the profile) is preserved. This is done as follows.
Let $A$ be the area under the original combined profile (before converting it into the tentative form). 
If the area under the tentative profile is smaller than $A$, then the left segment of the
final profile is the same as in the tentative profile, that is
$\aleft' = \aleftleft$ and $\dleft'=\dleftleft$, while the middle and right segments are adjusted to 
increase the area to $A$: We start with $\dright' = \drightright$ and $\aright' = \arightright$.
We then decrease $\dright'$ until either the area equals $A$ or $\dright'$ is reduced to $0$. If
$\dright'$ becomes $0$, we then increase $\aright'$ until the area becomes $A$. 
The case when the area under the tentative profile is larger than A is symmetric; in this case
the right segment of the final profile is the same as in the tentative profile, and we gradually
lower the middle and left segment to reduce the area to $A$.
\begin{figure}
	\captionsetup[subfigure]{justification=centering}
	\begin{subfigure}[t]{.3\textwidth}
		\centering
		\includegraphics[height = 1in]{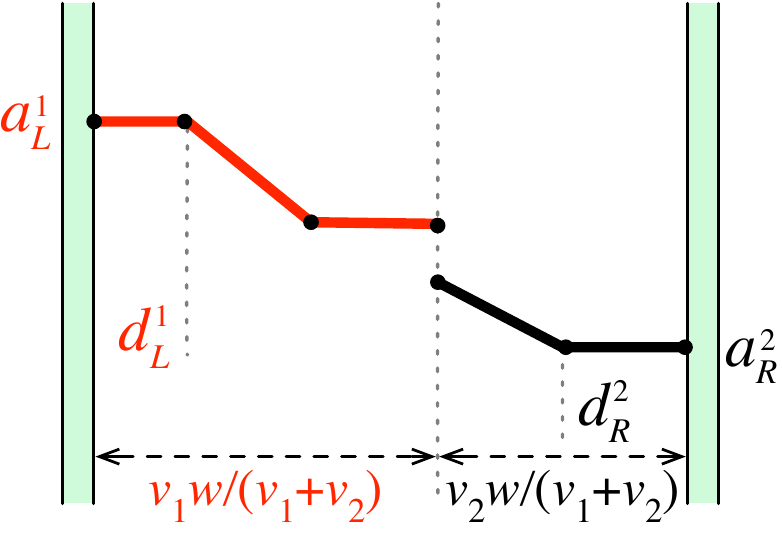}
		\caption{}
   		\label{fig_join1} 
	\end{subfigure}
	\hfill
	\begin{subfigure}[t]{.3\textwidth}
		\centering
		\includegraphics[height = 1in]{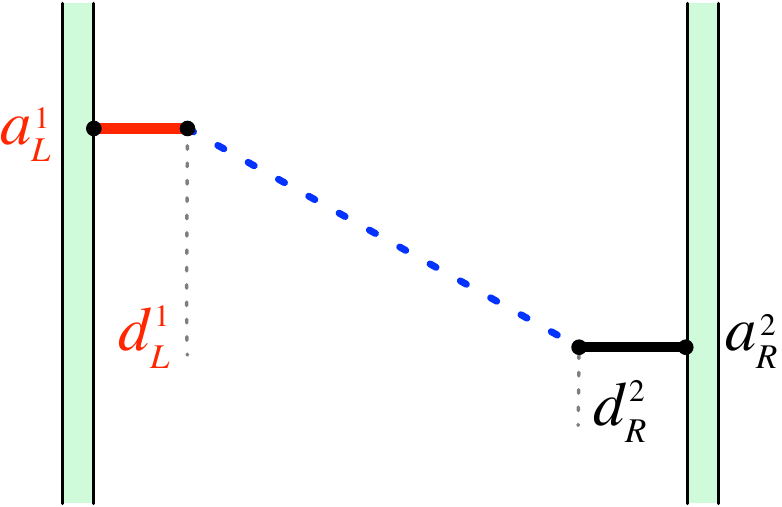}
		 \caption{}
   		\label{fig_join2} 
	\end{subfigure}
	\hfill
	\begin{subfigure}[t]{.3\textwidth}
		\centering
		\includegraphics[height = 1in]{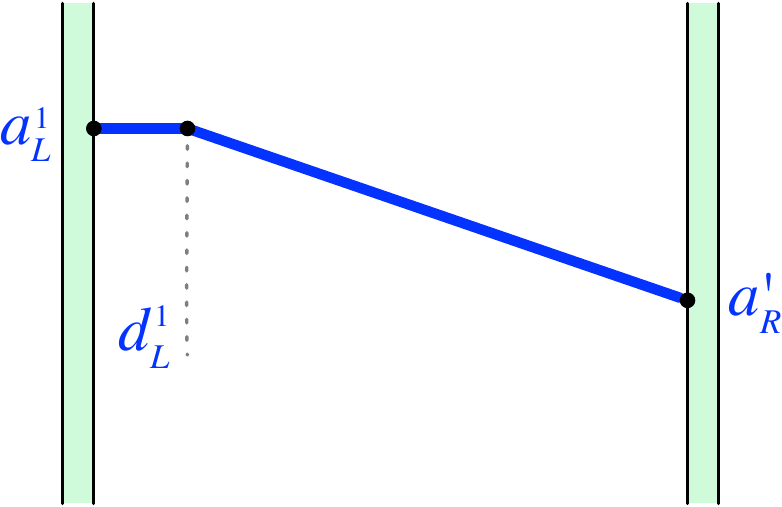}
		 \caption{}
   		\label{fig_join3} 
	\end{subfigure}	
	\caption{Joinning concentration profiles of two flows (red and black).
	Combined profile (a). The tentative {\TPL}-function (b).
	(c) one possible final {\TPL}-function in the case when the
	area under the tentative profile is too small.}
	\label{fig_join}
\end{figure}        


\subsection{Splitting Concentration Profiles}
\label{subsec: splitting concentration profiles}   

To split the flow at a node (either 2-way or 3-way),  the split profiles are determined by dividing the profile 
of the inflow proportionally to the velocity ratios of the outflows (see Figure~\ref{fig_split})
and rescaling appropriately. 
For example, for two-way splits, if the velocities of the left and right outflows are $v_1$ and $v_2$, then
the inflow profile is divided into two parts: the left one of width $v_1w/(v_1+v_2)$ and
the right one of width $v_2w/(v_1+v_2)$. 
This produces two profiles which are then rescaled to have width $w$. 
Conveniently, splitting a profile represented by a {\TPL}-function produces profiles that
are also in the same form, and the parameters of these new {\TPL}-functions
can be determined with a straightforward computation, by rescaling. 
Thus, in this case no further simplifications of the new profiles are needed.


\subsection{Join-and-Split Nodes.} 
 
In the grid there may also be nodes with two inflows and two outflows.
These nodes must have the form shown in Figure~\ref{fig_join_split}, namely the
inflow channels are adjacent, and so are the outflow channels.
(The other option, with the two inflows being opposite of each other,
is impossible, as this would imply the existence of a flow circulation in the grid.) 

We treat this case by reducing it to simple joins and splits, for which we
provided solutions earlier. This is done by breaking the computation of the new profile
into two sub-steps:
\begin{itemize}
	\item First, we join the two inflow {\TPL}-functions, following the method described in 
	Section~\ref{subsec: joining concentration profiles}.
 	\item Then, we take the computed {\TPL}-function profile and
		split it into two {\TPL}-functions for the
		outflow channels, following the method discussed
		 in Section~\ref{subsec: splitting concentration profiles}.
\end{itemize}

\begin{figure}      
\hfill
\begin{minipage}[t]{0.4\linewidth}
	\includegraphics[height = 1.2in]{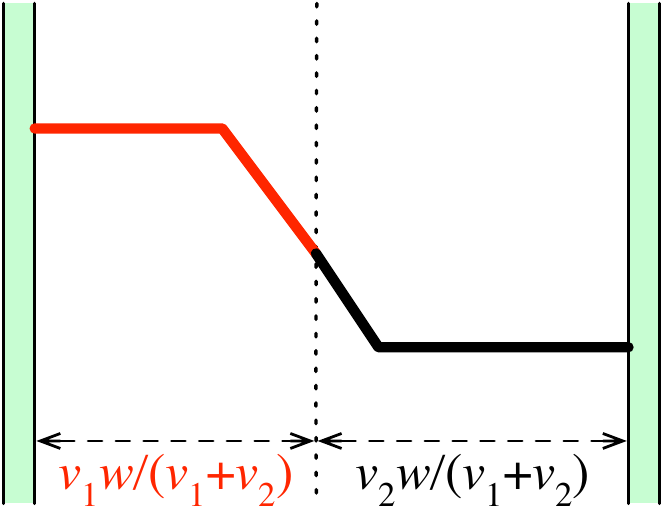}
	\caption{Splitting a profile into two profiles 
	(red and black).} 
	\label{fig_split}			
\end{minipage}
\hfill
\begin{minipage}[t]{0.4\linewidth}	
	\includegraphics[height = 1.2in]{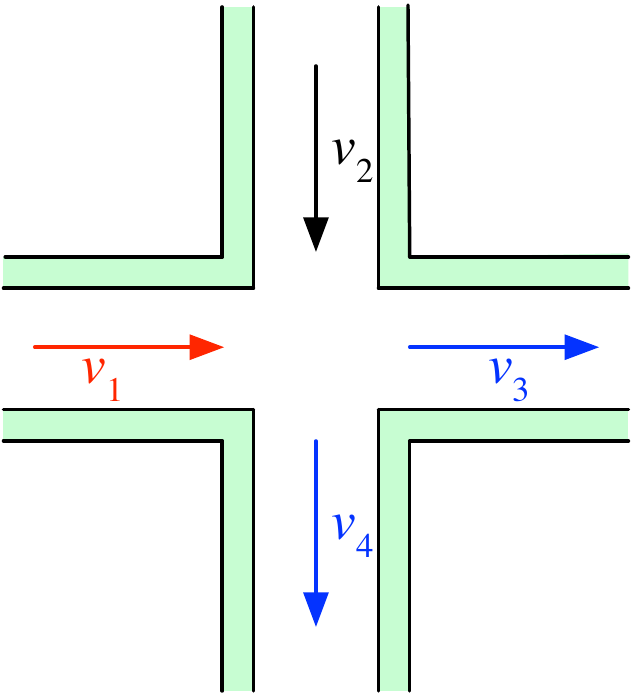}
	\caption{A join-and-split node. Flow $v_1$ and $v_2$ is joined and then split into $v_3$ and $v_4$.} 
	\label{fig_join_split}	
\end{minipage}
\hfill
\end{figure}			


\section{Experimental Results}
\label{sec: experimental results}


\begin{figure}[ht]
	\includegraphics[width=\linewidth]{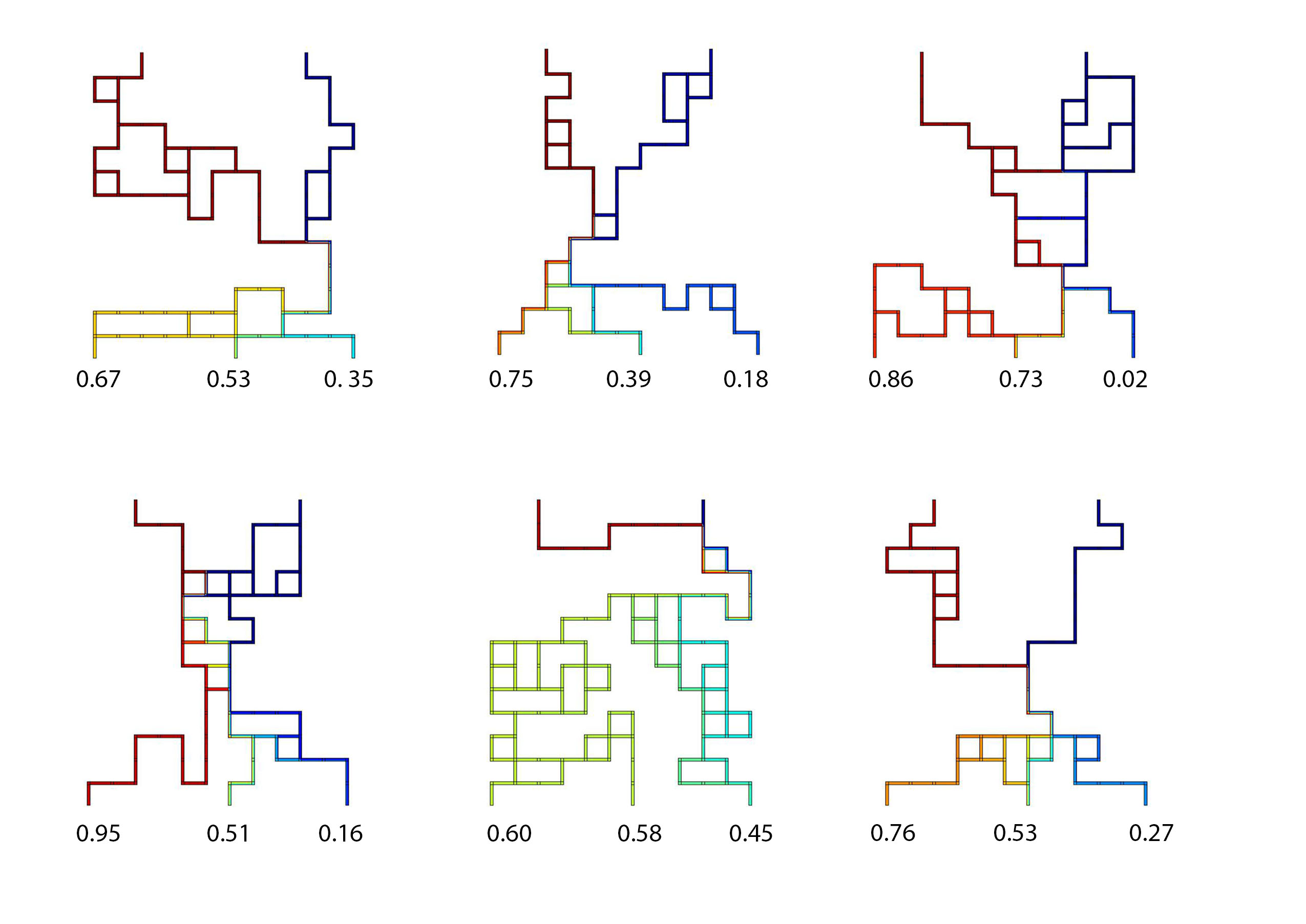}
	\caption{Randomly constructed $12\times 12$ grids with concentration values at the outlets.}
		\label{fig_sample}
\end{figure}

We used MATLAB® with COMSOL Multiphysics® via LiveLink™ \cite{comsol} to generate our test grids. 
Note that in the method from~\cite{wang_random_2016}, most generated chips have unreachable or
``dead-end'' channels, namely
channels that do not appear on any inlet-to-outlet path. Such channels are redundant, as they
have no effect on the mixing process. 
Our generator uses a similar approach as in~\cite{weiquing_more_2018}
to generate only grids that are connected and have no redundant channels.
Figure~\ref{fig_sample} shows four examples of random $12\times 12$ grids obtained from our generator
as well as their outlet concentrations. 

Our concentration prediction
algorithm is implemented in Python and tested on a 3.50GHz quad core 32GB RAM  workstation. 
We conducted an experimental comparison of our algorithm with COMSOL simulation. 
In our experiment, we used 200 $12\times 12$ sample random grids (obtained from our generator)
 with two inlets (the left has reactant with concentration 1 and the right has buffer with concentration 0)
 and three outlets. The flow velocities at the inlets are both of constant rate $1~mm/s$.
The reactant is sodium, with diffusion coefficient value $1.33\times10^{-4} mm^2/s$.
We used slower velocities than the chips of~\cite{wang_random_2016} 
in order to increase the time the fluid spends in the chip, which improves
variability of concentration values at the outlets.

For COMSOL simulations we used very fine triangular meshes containing 5-10 millions elements. 
This mesh size was determined through a mesh refinement study to ensure that this setting does not affect
outlet concentrations. For a random $12 \times 12$ grid we did a sweep on mesh element size
by reducing the size at each step and observed the changes in outlet concentrations, repeating until 
the changes are less than 1\% of maximum concentration.
\begin{figure}[h]
\begin{center}
\includegraphics[width=140mm]{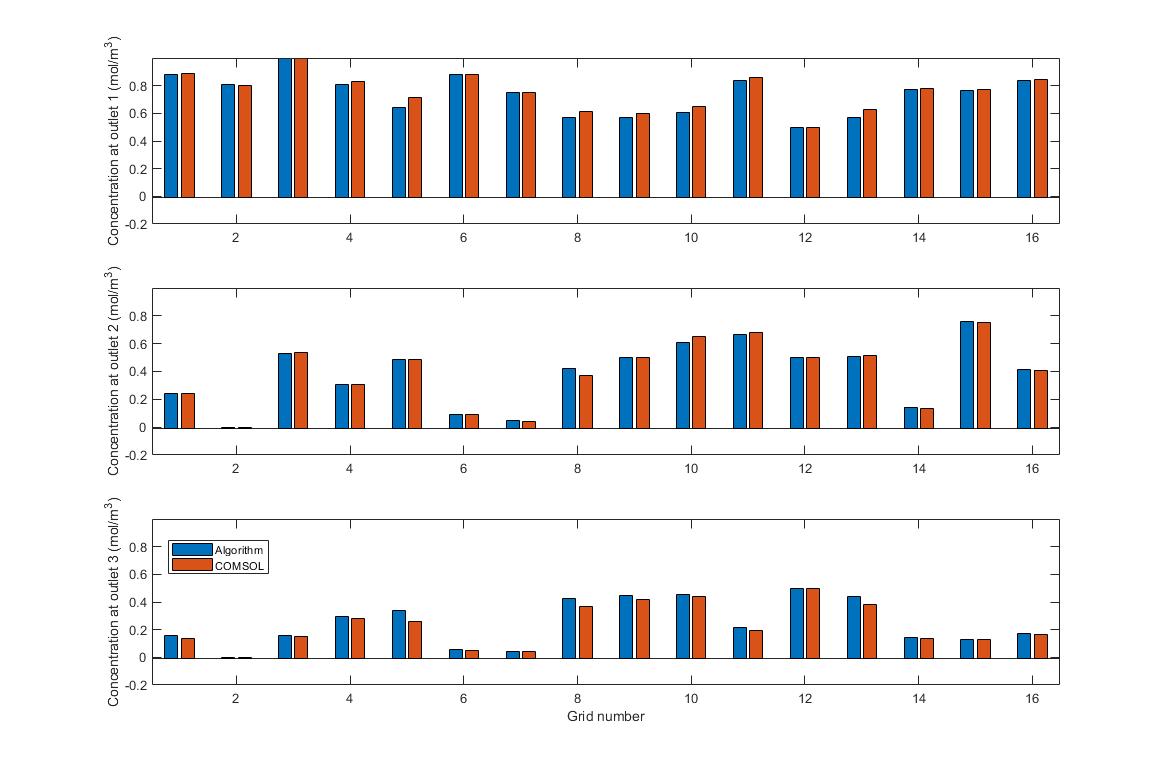}
\end{center}
\caption{Comparison of concentration values at the outlets for sixteen randomly selected grids. 
	Blue bars represent our algorithm and red bars represent COMSOL.}
\label{result_concentration}
\end{figure}

The results for fluid velocities at the 3 outlets are very consistent with COMSOL simulation, with 
the average percentage difference of velocity values being 0.8\%.
For concentration values at the outlets,
the average absolute difference is 0.006 $mol/m^3$, which is 0.6\% of the maximum concentration. 
The maximum absolute difference is 0.07 $mol/m^3$.
Figure~\ref{result_concentration} shows the difference of concentration values at the 3 outlets between 
the algorithm and COMSOL, on 16 randomly selected grid designs.

Execution times are measured on the same sample set of 200 grids. On average, 
with our mesh setting,
COMSOL takes approximately 21.3 minutes to finish the computation of one $12\times12$ grid. 
These times also vary significantly among different grids, with
the fastest time of about 6 minutes and the longest around 30 minutes. 
Our algorithm is several orders of magnitude faster, requiring on average only
0.0075 second to process one grid. It also uses much less memory: the memory used for concentration
profile computations is only linear in the side of the mixing grid design. 
COMSOL memory requirements are orders of magitude
higher, as it depends on the size of the mesh used for the simulation. 

We also performed an experiment to show that randomly generated $12\times 12$ grid designs can produce a
large collection of sufficiently different concentration vectors, and that  
a useful library of mixing grids can be successfully populated using our random grid generator and 
concentration prediction algorithm. 
We generated 50 million random designs on which we run our algorithm to compute output concentrations.
We then filtered out grids that were redundant, in the sense that their output concentration values differed
by no more than $0.01$. The resulting grid library consisted of 2600 different concentration vectors.


\section{Discussion}
\label{sec: discussion}



In this work we show that fluid mixing in microfluidic grids can be efficiently simulated
using simple 3-piece linear functions to model concentration profiles. Our algorithm 
is very fast, outperforming COMSOL simulations by several orders of magnitude in term of running time,
  while producing nearly identical results. 

While our current implementation of the algorithm assumes that the input is a grid design,
the overall technique applies to arbitrary acyclic planar graphs. The only required
assumptions involve inlets and outlets: all
inlets and outlets must be located on the external face, cannot interleave, and
the inlet concentrations must be non-decreasing in the clockwise direction along the external face.
This is not a significant restriction, as in a microfluidic device its inlets and outlets would
normally be located along its external boundary.

Another possible enhancement of our method is related to the assumption that the mixing process
in microfluidic grids is exclusively caused by diffusion. This is a valid assumption for flows
along straight channels, and it gives a good approximation of outlet concentrations 
for slow fluid velocity, typical for most microfluidic chips. 
Prior work in~\cite{elbow} showed, however, that with increased velocities, 
convection generated in channel bends affects mixing as well. 
(This will also likely apply to the split and join nodes.)
Taking such convections effects into account could further improve the accuracy of our method.


\bigskip
\myparagraph{Acknowledgements.}
We would like to thank P.~Brisk, W.~Grover, and V.~Rogers for enlightening us on
various aspects of microfluidics and fluid dynamics, 
and the anonymous reviewers for useful comments that helped us improve the 
presentation of this work.

\bibliographystyle{plain}

\bibliography{flows_in_grids_references}

\end{document}